\newtheorem{theorem}{Theorem}[section]
\theoremstyle{definition}
\numberwithin{equation}{section}
\newtheorem{remark}[theorem]{Remark}
\numberwithin{equation}{section}
\begin{document}
\title[The atmospheric Ekman spiral for piecewise-uniform eddy viscosity]
{The atmospheric Ekman spiral for piecewise-uniform eddy viscosity}

\author[E. Stefanescu]{Eduard Stefanescu}
\address{Institut für Mathematik, Universität Wien, Oskar-Morgenstern Platz 1, 1090 Wien, Austria}
\address{Institut f\"ur Analysis und Zahlentheorie, TU Graz, Steyrergasse 30, 8010 Graz, Austria}
\email{\href{mailto:eduard.stefanescu@tugraz.at}{eduard.stefanescu@tugraz.at}}

\subjclass[2020]{86A10, 34B05}
\keywords{viscous atmospheric flow, boundary-value problem}

\begin{abstract}
We investigate the boundary-value problem of atmospheric Ekman flows with piecewise-uniform eddy viscosity. In addition we present a method for finding more general solutions by considering eddy viscosity as an arbitrary step-function. We discuss the existence and uniqueness of the solutions obtained through this method, providing detailed proofs for cases with one and two "jumps" in eddy viscosity. For scenarios with more "jumps," we establish results inductively.
Furthermore, we examine the angle between the bottom surface of the Ekman layer and geostrophic winds by extremizing variables such as the eddy viscosity and its point of change. These calculations reveal how the angle can differ from \(45^\circ\), demonstrating that the extreme values of \(0^\circ\) and \(90^\circ\) are achievable, indicating the potential range of the deflection angle.
\end{abstract}

\maketitle

\section{Introduction}\label{sec1}
In 1905 Ekman investigated the behaviour of wind-drift ocean currents, in the attempt to explain Nansen’s observation of the surface current deflection to the right in arctic regions. He wrote down an explicit solution for the velocity components dependent on height, assuming a constant eddy viscosity \cite{Ekman}.
The case consisting of continuous eddy viscosity has been covered by Constantin \cite{Consti4}, but even though it is analytically interesting, it poses the disadvantage of being mathematically rather complicated, by yielding an integral solution, which is rarely explicitly solvable.
Hence, assuming piecewise constant eddy viscosity gives an advantage of physically modelling more accurate as considering constant values and it also avoids rather complicated expressions. 
A drawback in limiting ourselves to piecewise eddy viscosity lies in losing differentiability, which mathematically can be overcome easily by a weak formulation. But physically speaking, by considering non-smoothness, we may lose accuracy to the continuous case; in particular when eddy viscosity fluctuates a lot. Nevertheless, we win accuracy to previous results, where interpreting solutions are mathematically similarly accessible.      
An explicit solution to the depth-dependent velocity components for variable eddy viscosity and for the deflection angle, i.e. the behaviour of the Ekman spiral under the sea surface, was constructed by D.G. Dritschel, N. Paldor, and A. Constantin \cite{Consti1} in 2020, by using similar methods as Ekman did. This was further improved by L. Roberti in \cite{Luigi}.
In this paper, related procedures as the above mentioned articles are used to describe similar phenomena in the atmosphere, where we also refine and extend the methods used above.

To give a better perspective into the motivation of this paper, we recollect ideas from \cite{Consti2} to give some background on atmospheric layers. The laminar sublayer with a thickness of a few millimetres is the lowest one and is not of relevance for the transfer of wind energy. The Prandtl layer is around $10$m, depending on the thermal stratification of the air. The upper layer is called the Ekman layer, where the airflow is driven by a balance of turbulent drag, Coriolis forces and pressure gradients. 
This layer ranges from the top of the Prandtl layer to heights typically between 100 and 1000 meters, depending on the geographic location and local atmospheric conditions. The Ekman layer covers approximately $90\%$ of the atmospheric boundary layer. Studies of this layer are broadly applicable.
For example the Ekman layer plays a crucial role in the formation and development of weather systems. It is where the frictional effects of the Earth's surface significantly influence the wind patterns, contributing to the development of cyclones and anticyclones, see e.g. \cite{holton2013introduction} and \cite{stull}. Furthermore the turbulent nature of the Ekman layer enhances the dispersion of pollutants. Understanding this layer helps in predicting how pollutants spread from industrial sites, urban areas, and other sources, which is essential for environmental management and public health, see e.g. \cite{seinfeld} and \cite{arya}.
\\
In the Introduction of \cite{Consti2}, there are different kinds of eddy viscosity with different behaviours listed. In the following there are suggestions that the eddy viscosity can increase, decrease, or even have other behaviours: \cite{Berger1998TheBV}, \cite{Grisogono1995AGE}, \cite{Tan}, \cite{2005BoLMe.115..399P}, \cite{Estoque},\cite{Madsen1977ARM}. Furthermore, numerical simulations were made in \cite{Grisogono23}, \cite{Marlatt} and \cite{Chai}. For more details we refer to \cite{holton2013introduction},\cite{marshall2008atmosphere}. 
\\

\section{Equations of motion}
The goal of this paper is to gain insight into the dynamics of 
mesoscale steady flows in the Ekman layer in non-equatorial regions of
the northern hemisphere. Decomposing the horizontal velocity into pressure-driven (geostrophic) and wind-driven (Ekman) components, we start with the equations 
\begin{eqnarray}\label{1-2}
(U-U_g)f &= \displaystyle\frac{{\rm d}}{{\rm d}Z}\,\Big( \nu(Z)\, \frac{{\rm d}V}{{\rm d}Z}\Big)\,,\\
(V-V_g)f &= - \displaystyle\frac{{\rm d}}{{\rm d}Z}\,\Big( \nu(Z)\, \frac{{\rm d}U}{{\rm d}Z}\Big)\,,
\end{eqnarray}
governing at leading order steady atmospheric Ekman flow at mid-latitudes in the $f$-plane approximation (see \cite{ekmantype}  and \cite{cushman2011introduction}). Here $Z$ 
measures altitude, $U$ and $V$ are the (horizontal) zonal and respectively meridional height-dependent
mean wind velocity components with corresponding geostrophic wind components $U_g$
and $V_g$, $\nu(Z)$ is the height-dependent eddy viscosity, and $f=2\Omega \sin \theta$ is the (constant) Coriolis parameter 
associated to the flow region located in the northern hemisphere near latitude $\theta \in \big(0,\frac{\pi}{2}\big)$, $\Omega \approx 7.29 \times 10^{-5}$ 
rad/s being the constant rate of rotation of the Earth around its polar axis. The boundary conditions for the system (1)-(2) are
\begin{eqnarray}
& U(0)=V(0)=0 \,,\\
& \lim\limits_{Z \to \infty} (U(Z),V(Z)) &= (U_g,V_g)\,,
\end{eqnarray}
expressing the fact that, due to friction,  the velocity vanishes at the flat bottom $Z=0$ of the Ekman layer, while the flow becomes geostrophic high up.  \\
We non-dimensionalize the system (1)-(4) by setting
$$
(U,V)=U^\ast\,(u,v)\,,\qquad (U_g,V_g)=U^\ast(u_g,v_g)\,,\qquad Z=H^\ast\,z\,,\,\qquad \nu(Z)=\nu^\ast \nu_0(z)\,,
$$
where $U^\ast$ is a typical horizontal speed (of the order of 10 m/s), 
$H^\ast$ is the typical height of the atmospheric boundary layer (of the order of a few hundred m) and $\nu^\ast$ is the average eddy viscosity. 
Introducing complex variables by setting $\psi=u+{\rm i}v$, $\psi_g=u_g+{\rm i}v_g$, and denoting 
$$K(z)=\frac{2 \nu^\ast}{f (H^\ast)^2}\,\nu_0(z)=\frac{2}{f (H^\ast)^2}\,\nu(Z)\,,$$
we obtain the non-dimensional version of (1)-(4),
\begin{equation}\label{5}
    (K\psi')' - 2{\rm i}(\psi-\psi_g)=0\ ,
\end{equation}
\begin{equation}\label{6}
     \psi(0)=0\ ,\qquad \lim\limits_{z \to \infty} \psi(z) = \psi_z\ ,
\end{equation}
where, for convenience, we write a prime for the derivative with respect to $z$.

\section{Exact solution for piecewise-constant eddy viscosity}\label{ch3}
Solutions to the problem above were discussed in $1905$ by Ekman, see \cite{Ekman}, but the function $K$ was considered to be constant. Motivated by \cite{Consti1} we show a method to solve the equations above with $K$ being a step function with finitely many steps, i.e. $K(z)=\sum_{n=0}^{N-1}l_n^2\mathbb{I}_{(a_n,a_{n+1})}(z)$, where $l_n\in\mathbb{R}^+=\{x\in\mathbb{R}|x> 0\}$, $l_j\ne l_{j+1}$, for $j=0,1,2,...,N-2$; $a_0=0$, $a_n\in\mathbb{R}$, $a_n<a_{n+1}$ for $n=1,2,...,N-1$ and $a_{N}=\infty$.\\ 
\subsection{Constructing the solution of the governing equations}\label{sec3.1}
For simplicity, we use a step function with only one jump, 
i.e. $K(z)=\mathbb{I}_{[0,h)}+l^2\mathbb{I}_{(h,\infty)}$. Inserting in \eqref{5} yields:
\begin{align}\label{8}
    \begin{split}
       \psi''-2i(\psi-\psi_g)&=0  \textnormal{ for } 0\le z< h, \\\
        l^2\psi''-2i(\psi-\psi_g)&=0  \textnormal{ for } h< z<\infty.
    \end{split}
\end{align}

Solving this well known non-homogeneous linear differential equation of second order yields the solution: 
\begin{align}\label{33}
    \begin{split}
       \psi(z)&=Ae^{(1+i)z}+Be^{-(1+i)z}+\psi_g\textnormal{ for }0\le z< h, \\\
       \psi(z)&=Ce^{\frac{(1+i)z}{l}}+De^{-\frac{(1+i)z}{l}}+\psi_g\textnormal{ for }h< z< \infty
    \end{split}
\end{align}
We need four conditions to get explicit values for the constants $A,B,C,D$.
The right hand side of condition \eqref{6}, expressed as $\lim\limits_{z \rightarrow \infty}\psi(z)=\psi_g$, indicates that $\psi$ must be bounded. Consequently, the constant $C$ must be zero, since $\exp\left((1+i)z/l\right)$ is unbounded. Additionally, the left hand side of condition \eqref{6}, which states $\psi(0)=0$, leads to the equation $A+B+\psi_g=0$. 
Furthermore, we assume the function $\psi$ to be continuous, which means that $\psi(h^-)=\psi(h^+)$. Inserting this in \eqref{33} generates the equation $$Ae^{(1+i)h}+Be^{-(1+i)h}=De^{-\frac{(1+i)h}{l}}.$$ 
Integrating the above equations against testfunctions and using integration by parts lead to $$-\psi_-'(h)+l^2\psi_+'(h)=0.$$ 
This in particular yields
\begin{equation}\label{asbefore}
    Ae^{(1+i)h}-Be^{-(1+i)h}=-lDe^{-\frac{(1+i)h}{l}}.
\end{equation}
The above conditions are a system of linear equations. Solving this yields:
\begin{equation}\label{38}
\begin{split}
     A=\frac{\psi_g(l-1)}{1+e^{(2+2i)h}-l+e^{(2+2i)h}l},\ B=-\frac{\psi_ge^{(2+2i)h}(l+1)}{1+e^{(2+2i)h}-l+e^{(2+2i)h}l},\\\
   C=0,\ D=-\frac{2\psi_ge^{(1+i)h+\frac{(1+i)h}{l}}}{1+e^{(2+2i)h}-l+e^{(2+2i)h}l}.    
\end{split}  
\end{equation}
Thus, Equation \eqref{8} is fully solved.

\subsubsection{Solving method of the equations of motion for arbitrary piecewise constant eddy viscosity}
Let us consider the general case, where $K$ is a step function with finitely many steps, defined as in the beginning of Chapter \ref{ch3}.
We calculate all the ODEs in each height section the same way as before and get a solution with $2N$ unknown constants: 
\begin{align}
\begin{split}\label{318}
    \psi(z)&=A_{0,0}e^{\frac{(1+i)z}{l_0}}+A_{0,1}e^{-\frac{(1+i)z}{l_0}}+\psi_g\textnormal{ for } a_0=0\le z< a_1,\\\
    \psi(z)&=A_{1,0}e^{\frac{(1+i)z}{l_1}}+A_{1,1}e^{-\frac{(1+i)z}{l_1}}+\psi_g \textnormal{ for } a_1\le z<a_2,\\\
    &\vdots\\\
    \psi(z)&=A_{N-1,0}e^{\frac{(1+i)z}{l_{N-1}}}+A_{N-1,1}e^{-\frac{(1+i)z}{l_{N-1}}}+\psi_g   \textnormal{ for } a_{N-1}\le z< \infty.
\end{split}
\end{align}
So we need to write down $2N$ linear equations including these $2N$ unknown constants, that are linear independent from each other. 
We get two equations from the two boundary conditions \eqref{6}:
\begin{equation}\label{319}
    A_{0,0}+A_{0,1}+\psi_g=0,
\end{equation}
\begin{equation}\label{320}
    A_{N-1,0}=0.
\end{equation}
By assuming continuity of $\psi$ again, we get for each of the $(N-1)$ jumps a linear equation:
\begin{equation}\label{321}
\begin{split}
    A_{0,0}e^{\frac{(1+i)a_1}{l_0}}+A_{0,1}e^{-\frac{(1+i)a_1}{l_0}}=A_{1,0}e^{\frac{(1+i)a_1}{l_1}}+A_{1,1}e^{-\frac{(1+i)a_1}{l_1}},\\\
    A_{1,0}e^{\frac{(1+i)a_2}{l_1}}+A_{1,1}e^{-\frac{(1+i)a_2}{l_1}}=A_{2,0}e^{\frac{(1+i)a_2}{l_2}}+A_{2,1}e^{-\frac{(1+i)a_2}{l_2}},\\\
    \vdots\\\
    A_{N-2,0}e^{\frac{(1+i)a_{N-1}}{l_{N-2}}}+A_{N-2,1}e^{-\frac{(1+i)a_{N-1}}{l_{N-2}}}=A_{N-1,1}e^{\frac{(1+i)a_{N-1}}{l_{N-1}}}.
\end{split}  
\end{equation}
Finally by the same method as in \eqref{asbefore}, we get the remaining $N-1$ equations: 
\begin{equation}\label{323}
\begin{split}
    l_0A_{0,0}e^{\frac{(1+i)a_1}{l_0}}-l_0A_{0,1}e^{-\frac{(1+i)a_1}{l_0}}=l_1A_{1,0}e^{\frac{(1+i)a_1}{l_1}}-l_1A_{1,1}e^{-\frac{(1+i)a_1}{l_1}},\\\
    l_1A_{1,0}e^{\frac{(1+i)a_2}{l_1}}-l_1A_{1,1}e^{-\frac{(1+i)a_2}{l_1}}=l_2A_{2,0}e^{\frac{(1+i)a_2}{l_2}}-l_2A_{2,1}e^{-\frac{(1+i)a_2}{l_2}},\\\
    \vdots\\\
    l_{N-2}A_{N-2,0}e^{\frac{(1+i)a_{N-1}}{l_{N-2}}}-l_{N-2}A_{N-2,1}e^{-\frac{(1+i)a_{N-1}}{l_{N-2}}}=-l_{N-1}A_{N-1,1}e^{\frac{(1+i)a_{N-1}}{l_{N-1}}}.
    \end{split}
\end{equation}
The corresponding system of equations can be written in the following form:
\begin{equation}\label{324}
\mathcal{M}\mathbf{A}=\mathbf{b}, \textnormal{with}
\end{equation}
$$\mathbf{A}=\left(A_{0,0},A_{0,1},A_{1,0},A_{1,1},...,A_{N-2,0},A_{N-2,1},A_{N-1,1},A_{N-1,0}\right)^T,$$
$$\mathbf{b}=\left(-\psi_g,0,...,0\right)^T,$$
$$\mathcal{M}=$$
\tiny
$$\left[\begin{array}{rrrrrrrrrrr} 
1 & 1 & 0 & 0 & 0 & 0 & \cdots & \cdots  & \cdots&  0 & 0\\ 
\alpha_0 & \hat{\alpha}_0 & -\beta_1 & -\hat{\beta}_1 & 0 & 0 & \ddots & \ddots & \ddots & 0 & 0\\ 
0 & 0 & \alpha_1 & \hat{\alpha}_1 & -\beta_2 & -\hat{\beta}_2 & \ddots & \ddots  & \ddots & 0 & 0\\ 
\vdots & \vdots & \ddots & \ddots & \ddots &  \ddots & \ddots & \ddots & \ddots & \vdots & \vdots\\
0 & 0 & 0 & 0 & \cdots & \cdots & \cdots & \alpha_{N-2} & \hat{\alpha}_{N-2} & -\hat{\beta}_{N-1} & 0\\ 
l_0\alpha_0 & -l_0\hat{\alpha}_0 & -l_1\beta_1 & l_1\hat{\beta}_1 & 0 & 0 & \cdots & \cdots & \cdots &  0 & 0\\ 
0 & 0 & l_1\alpha_1 & -l_1\hat{\alpha}_1 & -l_2\beta_2 & l_2\hat{\beta}_2 & \ddots &\ddots & \ddots & 0 & 0\\ 
\vdots & \vdots & \ddots & \ddots & \ddots & \ddots & \ddots & \ddots & \ddots &\vdots & \vdots\\ 
0 & 0 & 0 & 0 & \cdots & \cdots & \cdots & l_{N-2}\alpha_{N-2} & -l_{N-2}\hat{\alpha}_{N-2} &  l_{N-1}\hat{\beta}_{N-1} & 0\\ 
0 & 0 & 0 & 0 & \cdots & \cdots & \cdots & 0 & 0 & 0 & 1\\ 
\end{array}\right],$$
\normalsize
where $\alpha_n:=e^{\frac{(1+i)a_{n+1}}{l_n}}$, $\hat{\alpha}_n:=e^{-\frac{(1+i)a_{n+1}}{l_n}}$ $\beta_n:=e^{\frac{(1+i)a_{n}}{l_n}}$, $\hat{\beta}_n:=e^{-\frac{(1+i)a_{n}}{l_n}}$.
\subsection{Linear independence}
When trying to solve a system of $2N$ equations with $2N$ unknowns we need to find out if there exists a solution and if it is unique, which is equivalent to the associated square matrix being invertible.
\subsubsection{$4\times4$-Matrix}\label{4x4}
First we only consider one jump at the point $a_1$, which means that $K(z)=l_0^2\mathbb{I}_{[0,a_1)}+l_1^2\mathbb{I}_{(a_1,\infty)}$. 
This is the case we discussed in Section \ref{sec3.1}, so when trying to calculate the solutions explicitly, we can use the following method:
By \eqref{319},\eqref{320} we have
\begin{equation}\label{3311}
    A_{0,0}+A_{0,1}+\psi_g=0,\quad A_{1,0}=0.
\end{equation}
We rewrite \eqref{321},\eqref{323}:
\begin{equation}\label{312}
\begin{bmatrix}
    \alpha_0 & \hat{\alpha}_0\\
    l_0\alpha_0 & -l_0\hat{\alpha}_0\\ 
\end{bmatrix}
\begin{pmatrix}
\vphantom{1} A_{0,0}\\
\vphantom{e^{-(1+i)h}} A_{0,1}\\
\end{pmatrix} = 
\begin{bmatrix}
    \beta_1 & \hat{\beta}_1\\
    l_1\beta_1 & -l_1\hat{\beta}_1\\ 
\end{bmatrix}
\begin{pmatrix}
\vphantom{1} 0\\
\vphantom{e^{-(1+i)h}} A_{1,1}\\
\end{pmatrix},
\end{equation}
where $\alpha_0$, $\beta_1$ are defined as in the previous section.
Inverting the first matrix on the left hand side is possible, since its determinant is $-2l_0$. Hence we get:
\begin{equation}\label{3314}
\begin{pmatrix}
\vphantom{1} A_{0,0}\\
\vphantom{e^{-(1+i)h}} A_{0,1}\\
\end{pmatrix} = 
\begin{bmatrix}
    \alpha_0 & \hat{\alpha}_0\\
    l_0\alpha_0 & -l_0\hat{\alpha}_0\\ 
\end{bmatrix}^{-1}
\begin{bmatrix}
    \beta_1 & \hat{\beta}_1\\
    l_1\beta_1 & -l_1\hat{\beta}_1\\ 
\end{bmatrix}
\begin{pmatrix}
\vphantom{1} 0\\
\vphantom{e^{-(1+i)h}} A_{1,1}\\
\end{pmatrix},
\end{equation}
This means that $A_{0,0}$ and $A_{0,1}$ are just multiples of $A_{1,1}$, i.e. there exists $\lambda_{0,0},\lambda_{0,1}$ such that $A_{0,0}=\lambda_{0,0}A_{1,1}$ and $A_{0,1}=\lambda_{0,1}A_{1,1}$. Inserting this into \eqref{3311} yields:
\begin{equation}
    A_{1,1}=\frac{-\psi_g}{\lambda_{0,0}+\lambda_{0,1}}.
\end{equation}
Since $A_{0,0}$ and $A_{0,1}$ are dependent on $A_{1,1}$ the last thing remaining is to check if $\lambda_{0,0}+\lambda_{0,1}\ne 0$ to get a unique solution to all coefficients.
\begin{theorem} Let $\lambda_{0,0}$ and $\lambda_{0,1}$ be defined as above. Then \label{thm1}
    $\lambda_{0,0}+\lambda_{0,1}\ne 0$.
\end{theorem}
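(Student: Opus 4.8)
The plan is to make the two multipliers $\lambda_{0,0},\lambda_{0,1}$ completely explicit and then reduce the claim $\lambda_{0,0}+\lambda_{0,1}\neq 0$ to an elementary inequality between complex moduli. First I would carry out the matrix computation in \eqref{3314}. Since the left matrix has determinant $-2l_0\neq 0$ (already recorded in the text), its inverse is explicit, and because the right-hand vector is $(0,A_{1,1})^{T}$ only the second column of the right matrix contributes. Multiplying through and using $\alpha_0\hat\alpha_0=1$, one reads off
\begin{equation*}
\lambda_{0,0}=\frac{l_0-l_1}{2l_0}\,e^{-(1+i)a_1/l_0}e^{-(1+i)a_1/l_1},\qquad
\lambda_{0,1}=\frac{l_0+l_1}{2l_0}\,e^{(1+i)a_1/l_0}e^{-(1+i)a_1/l_1}.
\end{equation*}

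Next I would factor out the common nonzero scalar $e^{-(1+i)a_1/l_1}/(2l_0)$, so that $\lambda_{0,0}+\lambda_{0,1}=0$ becomes equivalent to
\begin{equation*}
(l_0-l_1)\,e^{-(1+i)a_1/l_0}+(l_0+l_1)\,e^{(1+i)a_1/l_0}=0.
\end{equation*}
Setting $w:=e^{(1+i)a_1/l_0}$ and multiplying by $w$, this is equivalent to $w^{2}=\dfrac{l_1-l_0}{l_0+l_1}$. Hence it suffices to show that $w^{2}$ can never equal this quotient.

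The key step, and really the only substantive point, is a comparison of moduli. Because $a_1>a_0=0$ and $l_0>0$, we have $\mathrm{Re}\big((1+i)a_1/l_0\big)=a_1/l_0>0$, so $|w^{2}|=e^{2a_1/l_0}>1$. On the other hand, since $l_0,l_1>0$ the inequality $|l_1-l_0|<l_1+l_0$ holds (equivalently $(l_1-l_0)^{2}<(l_1+l_0)^{2}$, i.e. $0<4l_0l_1$), whence $\big|\tfrac{l_1-l_0}{l_0+l_1}\big|<1$. A number of modulus strictly greater than $1$ cannot equal one of modulus strictly less than $1$, so the identity $w^{2}=\tfrac{l_1-l_0}{l_0+l_1}$ is impossible, and therefore $\lambda_{0,0}+\lambda_{0,1}\neq 0$.

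I do not anticipate a genuine obstacle: the matrix algebra is routine and the determinant is already available. The one place to stay careful is the bookkeeping of the four exponential factors $\alpha_0,\hat\alpha_0,\beta_1,\hat\beta_1$ when forming the inverse, so that the cancellation $\alpha_0\hat\alpha_0=1$ is applied correctly and the factors $l_0-l_1$ and $l_0+l_1$ in the two multipliers are not interchanged. Everything then hinges only on the positivity of $a_1$ and of $l_0,l_1$, which is exactly what the standing hypotheses guarantee.
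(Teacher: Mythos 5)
Your proposal is correct and follows essentially the same route as the paper: the explicit formulas you derive for $\lambda_{0,0}$ and $\lambda_{0,1}$ agree with those in the paper's proof, and your reduction to $w^{2}=\tfrac{l_1-l_0}{l_0+l_1}$ is exactly the paper's equation $\tfrac{l_0-l_1}{l_0+l_1}=-e^{(2+2i)a_1/l_0}$, ruled out in both cases by comparing moduli ($e^{2a_1/l_0}>1$ versus $\bigl|\tfrac{l_0-l_1}{l_0+l_1}\bigr|<1$, using $a_1>0$ and $l_0,l_1>0$). No gaps.
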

\begin{proof}
    By \eqref{3314} we get $\lambda_{0,0},\lambda_{0,1}$ explicitly:
$$\lambda_{0,0}=\frac{e^{\frac{-(1+i)a_1(l_1+l_0)}{l_1l_0}}(-l_1+l_0)}{2l_0},\quad\lambda_{0,1}=\frac{e^{\frac{(1+i)a_1(l_1-l_0)}{l_1l_0}}(l_1+l_0)}{2l_0},$$
Now, we assume $\lambda_{0,0}+\lambda_{0,1}= 0$, which is equivalent to
\begin{align}
    \begin{split}
        (-1+e^{\frac{(2+2i)a_1}{l_0}})l_1+(1+e^{\frac{(2+2i)a_1}{l_0}})l_0&=0\\\
        \iff \frac{l_0-l_1}{l_0+l_1}&=-e^{\frac{(2+2i)a_1}{l_0}}.
    \end{split}
\end{align}
But we have $|\frac{l_0-l_1}{l_0+l_1}|<1$, by the positivity of $l_0,l_1$ and the inverse triangle inequality and $|-e^{\frac{(2+2i)a_1}{l_0}}|=|e^{2\frac{a_1}{l_0}}|>1$, by the positivity of $l_0$ and $a_1$. This is a contradiction to the assumption.
\end{proof}
Thus, the solution to all coefficients exists and is unique.
\subsubsection{$6\times 6$-Matrix}\label{6x6}
Similar as before we rewrite \eqref{321}, \eqref{323}:
\begin{equation}\label{33141}
\begin{pmatrix}
\vphantom{1} A_{0,0}\\
\vphantom{e^{-(1+i)h}} A_{0,1}\\
\end{pmatrix} = 
\begin{bmatrix}
    \alpha_0 & \hat{\alpha}_0\\
    l_0\alpha_0 & -l_0\hat{\alpha}_0\\ 
\end{bmatrix}^{-1}
\begin{bmatrix}
    \beta_1 & \hat{\beta}_1\\
    l_1\beta_1 & -l_1\hat{\beta}_1\\ 
\end{bmatrix}
\begin{pmatrix}
\vphantom{1} A_{1,0}\\
\vphantom{e^{-(1+i)h}} A_{1,1}\\
\end{pmatrix},
\end{equation}
\begin{equation}\label{33142}
\begin{pmatrix}
\vphantom{1} A_{1,0}\\
\vphantom{e^{-(1+i)h}} A_{1,1}\\
\end{pmatrix} = 
\begin{bmatrix}
    \alpha_1 & \hat{\alpha}_1\\
    l_1\alpha_1 & -l_1\hat{\alpha}_1\\ 
\end{bmatrix}^{-1}
\begin{bmatrix}
    \beta_2 & \hat{\beta}_2\\
    l_2\beta_2 & -l_2\hat{\beta}_2\\ 
\end{bmatrix}
\begin{pmatrix}
\vphantom{1} 0\\
\vphantom{e^{-(1+i)h}} A_{2,1}\\
\end{pmatrix}.
\end{equation}
Consequently $A_{i,j};i,j=0,1$ are multiples of $A_{2,1}$, i.e. there exists $\lambda_{i,j}$ such that $A_{i,j}=\lambda_{i,j}A_{2,1};i,j=0,1$.
In particular we can write:
\begin{equation}\label{3316}
\begin{pmatrix}
\vphantom{1} A_{0,0}\\
\vphantom{e^{-(1+i)h}} A_{0,1}\\
\end{pmatrix} = 
\begin{bmatrix}
    \alpha_0 & \hat{\alpha}_0\\
    l_0\alpha_0 & -l_0\hat{\alpha}_0\\ 
\end{bmatrix}^{-1}
\begin{bmatrix}
    \beta_1 & \hat{\beta}_1\\
    l_1\beta_1 & -l_1\hat{\beta}_1\\ 
\end{bmatrix}
\begin{bmatrix}
    \alpha_1 & \hat{\alpha}_1\\
    l_1\alpha_1 & -l_1\hat{\alpha}_1\\ 
\end{bmatrix}^{-1}
\begin{bmatrix}
    \beta_2 & \hat{\beta}_2\\
    l_2\beta_2 & -l_2\hat{\beta}_2\\ 
\end{bmatrix}
\begin{pmatrix}
\vphantom{1} 0\\
\vphantom{e^{-(1+i)h}} A_{2,1}\\
\end{pmatrix}.
\end{equation}
The above is well defined, since the matrices that were inverted have determinant $-2l_i$, $i=0,1$.
Again by the ''zero boundary condition'' and the same calculation as above we have: 
\begin{equation}
    A_{2,1}=\frac{-\psi_g}{\lambda_{0,0}+\lambda_{0,1}}.
\end{equation}
Hence, to show existence and uniqueness of all coefficients, we only need to check that $\lambda_{0,0}+\lambda_{0,1}\ne 0$.
\begin{theorem}
Let $\lambda_{0,0}$ and $\lambda_{0,1}$ be defined as above. Then  $\lambda_{0,0}+\lambda_{0,1}\ne 0$.   
\end{theorem}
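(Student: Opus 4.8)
The plan is to avoid multiplying out the triple product $M_0^{-1}N_1M_1^{-1}N_2$ and instead read off what the condition $\lambda_{0,0}+\lambda_{0,1}=0$ actually \emph{means}. Recall that $\lambda_{0,0},\lambda_{0,1}$ are the entries of the second column of that product, that the construction forces $A_{2,0}=0$ by boundedness, and that \eqref{319} reads $A_{0,0}+A_{0,1}=-\psi_g$. Since the transfer relations \eqref{33141}--\eqref{33142} do not involve $\psi_g$, the very same $\lambda_{i,j}$ govern the \emph{homogeneous} problem (replace $\psi_g$ by $0$ in \eqref{5}--\eqref{6}). Thus $\lambda_{0,0}+\lambda_{0,1}=0$ says precisely that the homogeneous problem admits the nontrivial bounded solution obtained by setting $A_{2,1}=1$, $A_{2,0}=0$ and propagating downwards through \eqref{33141}--\eqref{33142}: its value at $z=0$ is $A_{0,0}+A_{0,1}=\lambda_{0,0}+\lambda_{0,1}=0$. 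So it suffices to prove that the only bounded $\psi$ solving $(K\psi')'=2i\psi$ on $(0,\infty)$ with $\psi(0)=0$ is $\psi\equiv0$ — a statement that simultaneously re-proves Theorem \ref{thm1} and settles every number of jumps at once.

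First I would record the regularity of such a $\psi$: it is smooth inside each layer, continuous with $K\psi'$ continuous across $a_1,a_2$ (these are exactly \eqref{321} and \eqref{323} with vanishing right-hand sides), and decays like $e^{-(1+i)z/l_{2}}$ as $z\to\infty$ because $A_{2,0}=0$, so that $\psi,\psi'\to0$ there. I would then multiply $(K\psi')'=2i\psi$ by $\bar\psi$, integrate over $(0,\infty)$, and integrate by parts on each layer $(a_n,a_{n+1})$ separately. The boundary terms at the interior interfaces cancel in pairs, since both $\psi$ and $K\psi'$ are continuous there; the term at $z=0$ vanishes because $\psi(0)=0$; the term at $z=\infty$ vanishes by exponential decay. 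What remains is
\begin{equation*}
-\int_0^\infty K\,\lvert \psi'\rvert^2\,dz \;-\; 2i\int_0^\infty \lvert \psi\rvert^2\,dz \;=\; 0 .
\end{equation*}

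Taking real parts decouples the two integrals: because $K=\sum_n l_n^2\,\mathbb{I}_{(a_n,a_{n+1})}>0$, the real part yields $\int_0^\infty K\lvert\psi'\rvert^2\,dz=0$, forcing $\psi'\equiv0$ almost everywhere, so $\psi$ is constant on $(0,\infty)$; continuity together with $\psi(0)=0$ then gives $\psi\equiv0$. This contradicts $A_{2,1}=1\neq0$, and hence $\lambda_{0,0}+\lambda_{0,1}\neq0$.

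The main obstacle is bookkeeping rather than analysis: one must check that the vector built from $A_{2,1}=1$ via \eqref{33141}--\eqref{33142} genuinely satisfies \emph{both} the continuity conditions \eqref{321} and the flux conditions \eqref{323}, so that no leftover jump terms survive the integration by parts across $a_1$ and $a_2$ — this is exactly what legitimises the weak formulation. The one genuinely analytic step is the decoupling via the real part, which is what excludes oscillatory nonzero solutions despite the factor $2i$ on the right-hand side. For readers preferring the explicit style of Theorem \ref{thm1}, the alternative is to expand the second column of $M_0^{-1}N_1M_1^{-1}N_2$, reduce $\lambda_{0,0}+\lambda_{0,1}=0$ to a transcendental identity in $l_0,l_1,l_2,a_1,a_2$, and attempt the same modulus comparison $\lvert\mathrm{LHS}\rvert<1<\lvert\mathrm{RHS}\rvert$ used for one jump; this is faithful to the earlier argument but algebraically far heavier, and verifying that the clean inequality persists for three distinct viscosities is precisely where it turns delicate — which is why I would favour the energy argument above.
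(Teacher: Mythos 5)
Your proof is correct, and it takes a genuinely different route from the paper's. The paper proceeds exactly along the lines you sketch (and discard) in your closing paragraph: it expands the second column of the matrix product in \eqref{3316}, factors $\lambda_{0,0}+\lambda_{0,1}$ into a nonvanishing exponential prefactor times $x(c_1\eta_2-c_2\theta_1)+y(-c_1\eta_1+c_2\theta_2)$, and rules out vanishing by a modulus comparison: using $\theta_2\eta_1=\theta_1\eta_2$ and positivity of the $l_i$ it proves $|c_2\theta_2-c_1\eta_1|>|c_2\theta_1-c_1\eta_2|$, which combined with $|y/x|>1$ forbids cancellation. Your argument instead reads $\lambda_{0,0}+\lambda_{0,1}=0$ as the statement that the homogeneous problem $(K\psi')'=2i\psi$, $\psi(0)=0$, $\psi$ bounded, admits the nontrivial solution obtained by propagating $A_{2,1}=1$, $A_{2,0}=0$ through \eqref{33141}--\eqref{33142} (legitimate, since $\psi_g$ cancels from both sides of the continuity conditions \eqref{321} and is absent from the flux conditions \eqref{323}), and then kills any such solution by the energy identity $-\int_0^\infty K\,|\psi'|^2\,dz-2i\int_0^\infty|\psi|^2\,dz=0$, whose real part forces $\psi'\equiv 0$ and hence, with $\psi(0)=0$, $\psi\equiv 0$ (the imaginary part even gives $\psi\equiv 0$ in one step); the interface terms in the layer-by-layer integration by parts cancel precisely because $\psi$ and $K\psi'$ are continuous, and the terms at $0$ and $\infty$ vanish by the boundary condition and exponential decay, so the identity is rigorous. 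The trade-off is clear: the paper's computation produces the explicit coefficient formulas that the later deflection-angle analysis consumes, whereas your energy argument is dimension-blind --- it proves this theorem, Theorem \ref{thm1}, and unique solvability for \emph{any} number of jumps in one stroke, with no smallness condition, thereby actually strengthening the paper's $2n\times 2n$ theorem, which requires the hypothesis $|l_{k+1}-l_k|/|l_{k+1}+l_k|\le\varepsilon_k$ on consecutive viscosities.
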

\begin{proof}
By \eqref{3316} we can calculate $\lambda_{0,0},\lambda_{0,1}$ explicitly and write as before:
$$\lambda_{0,0}+\lambda_{0,1}=\frac{e^{-\frac{(1+i)a_2(l_1+l_2)}{l_0l_1}}}{4l_0l_1}[x(c_1\eta_2-c_2\theta_1)+y(-c_1\eta_1+c_2\theta_2)],$$
with $$x=e^{\frac{-(1+i)a_1(l_1+l_0)}{l_0l_1}}, y=e^{\frac{(1+i)a_1(-l_1+l_0)}{l_0l_1}},$$
$$c_1=e^{\frac{(2+2i)a_1}{l_1}},c_2=e^{\frac{(2+2i)a_2}{l_1}},$$
$$\eta_1=(l_1-l_0)(l_1-l_2),\eta_2=(l_1+l_0)(l_1-l_2),$$
$$\theta_1=(l_1-l_0)(l_1+l_2),\theta_2=(l_1+l_0)(l_1+l_2).$$
Since the first factor is nonzero we only need to check $x(c_1\eta_2-c_2\theta_1)+y(c_1\eta_1-c_2\theta_2)\ne 0$.
\textbf{Claim 1:} $|c_2\theta_2-c_1\eta_1|>|c_2\theta_1-c_1\eta_2|$.\\
An equivalent statement that is true:
$$|c_2\theta_2-c_1\eta_1|^2>|c_2\theta_1-c_1\eta_2|^2$$
By simple mathematical identities and by the fact that $\theta_2\eta_1=\theta_1\eta_2$, this is equivalent to
$$ 2e^{\frac{4a_2}{l_1}}(2l_1^3l_0+4l_1^2l_0l_2+2l_0l_1l_2)>2e^{\frac{4a_1}{l_1}}(2l_1^3l_0-4l_1^2l_0l_2+2l_0l_1l_2)$$
We have that $e^{\frac{4a_2}{l_1}}>e^{\frac{4a_1}{l_1}}$ by the fact that $a_2>a_1$. 
Furthermore $(2l_1^3l_0+4l_1^2l_0l_2+2l_0l_1l_2)>(2l_1^3l_0-4l_1^2l_0l_2+2l_0l_1l_2)$ by the fact that all $l_i>0, i=0,1,2$.
So the last expression is true, which means that the first expression of this claim is true and claim 1 is proven.\\
\textbf{Claim 2:} $\lambda_{0,0}+\lambda_{0,1}\ne 0$.\\
By the fact that $|\frac{y}{x}|>1$, we have that
$$|c_2\theta_2-c_1\eta_1|>|c_2\theta_1-c_1\eta_2|,$$
implies
$$ y(-c_1\eta_1+c_2\theta_2)\ne -x(c_1\eta_2-c_2\theta_1).$$
Hence $\lambda_{0,0}+\lambda_{0,1}\ne 0.$

\end{proof}
\subsubsection{$2n\times2n$-Matrix}
First we define $$A_{k+1}:=\begin{bmatrix}
\alpha_k & \hat{\alpha}_k\\
l_k\alpha_k & -l_k\hat{\alpha}_k
\end{bmatrix}, B_k=\begin{bmatrix}
\beta_k & \hat{\beta}_k\\
l_k\beta_k & -l_k\hat{\beta}_k
\end{bmatrix},$$   
$$A_1^{-1}B_1A_2^{-1}B_2...A_k^{-1}B_k:=\begin{bmatrix}
w_k & x_k\\
y_k & z_k
\end{bmatrix}.$$
The above is well defined, since $\det(A_i)=-2l_{i-1},i=1,...,n$.
Similar as before we can write
\begin{equation}\label{234}
\begin{pmatrix}
\vphantom{1} A_{1,0}\\
\vphantom{e^{-(1+i)h}} A_{1,1}\\
\end{pmatrix} = 
A_1^{-1}B_1A_2^{-1}B_2...A_n^{-1}B_n
\begin{pmatrix}
\vphantom{1} 0\\
\vphantom{e^{-(1+i)h}} A_{n,1}\\
\end{pmatrix},
\end{equation}
and $A_{i,j};i=0,...n-1;j=0,1$ are multiples of $A_{n,1}$, i.e. there exists $\lambda_{i,j}$ such that $A_{i,j}=\lambda_{i,j}A_{2,1}; i=0,...,n-1; j=0,1$.
Again, if $\lambda_{0,0}+\lambda_{0,1}\ne 0$ we get an unique solution for $A_{n,1}$, hence we get an unique solution for all coefficients.\\
We observe that:
\begin{equation}\label{2N}
 A_k^{-1}B_k=\begin{bmatrix}
-\hat{\alpha}_{k-1}\beta_k(l_k+l_{k-1}) & \hat{\alpha}_{k-1}\hat{\beta}_k(l_k-l_{k-1})\\
\alpha_{k-1}\beta_k(l_k-l_{k-1}) & -\alpha_{k-1}\hat{\beta}_k(l_k+l_{k-1})
\end{bmatrix}.   
\end{equation}
The idea of the proof in the general case is to assume the viscosities $l_i,i=0,...,n$ to be close to each other, which makes sense in physics, when assuming numerous different values for them. When doing so the matrices $A_k^{-1}B_k$, for $k=1,...,n$ becomes ''nearly'' diagonal. Using the structure of \eqref{234}, $\lambda_{0,0}$ can be made arbitrary small. Consequently $\lambda_{0,0}+\lambda_{0,1}\ne 0$ holds.
\begin{theorem} Let all variables be defined as above. Then 
there exists an $\varepsilon_k>0, k=0,1,...,n-1$, such that $\frac{|l_{k+1}-l_k|}{|l_{k+1}+l_k|}\le\varepsilon_k$ implies:
$$\lambda_{0,0}+\lambda_{0,1}\ne 0.$$
In particular this means that for all $k$, if we choose $l_{k+1}$ close enough to $l_{k}$ such that for a given $\varepsilon_k$ the above holds, then there exists an unique solution to all unknown constants in \eqref{318}.
\end{theorem}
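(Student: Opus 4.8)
The plan is to treat the matrix product $P_n:=A_1^{-1}B_1\cdots A_n^{-1}B_n=\begin{bmatrix} w_n & x_n\\ y_n & z_n\end{bmatrix}$ as a small perturbation of a diagonal matrix, the perturbation size being governed by the ratios $\delta_k:=\tfrac{l_k-l_{k-1}}{l_k+l_{k-1}}$ appearing in the hypothesis. As in the $6\times6$ case, \eqref{234} identifies $\lambda_{0,0}=x_n$ and $\lambda_{0,1}=z_n$, so everything reduces to proving $x_n+z_n\neq 0$. Reading off \eqref{2N}, I would split each factor into its diagonal and off-diagonal parts,
$$
A_k^{-1}B_k=D_k+N_k,\qquad D_k=\begin{bmatrix}-(l_k+l_{k-1})\hat\alpha_{k-1}\beta_k & 0\\ 0 & -(l_k+l_{k-1})\alpha_{k-1}\hat\beta_k\end{bmatrix},
$$
with $N_k=\delta_k(l_k+l_{k-1})\left[\begin{smallmatrix}0 & \hat\alpha_{k-1}\hat\beta_k\\ \alpha_{k-1}\beta_k & 0\end{smallmatrix}\right]$. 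The key structural observation is that $N_k$ is exactly what disappears when $l_k=l_{k-1}$, and $\|N_k\|$ is bounded by $\varepsilon_{k-1}$ times a fixed constant, whereas $D_k$ stays invertible with entries of modulus $(l_k+l_{k-1})e^{\pm a_k(1/l_{k-1}-1/l_k)}$.

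Next I would compare the full product $P_n$ with the purely diagonal product $\mathcal D:=D_1\cdots D_n$. Expanding $\prod_k(D_k+N_k)$ and isolating the term $\mathcal D$, the elementary telescoping estimate for a submultiplicative norm yields
$$
\|P_n-\mathcal D\|\ \le\ \prod_{k=1}^n\bigl(\|D_k\|+\|N_k\|\bigr)-\prod_{k=1}^n\|D_k\|,
$$
whose right-hand side is a polynomial in the $\|N_k\|$ with vanishing constant term, hence $O(\max_k\varepsilon_{k-1})$. Since $\mathcal D$ is diagonal its $(1,2)$-entry is $0$, so $|x_n|=|(P_n-\mathcal D)_{12}|\le\|P_n-\mathcal D\|$ can be made arbitrarily small, which is the precise content of the heuristic that ``$\lambda_{0,0}$ can be made arbitrarily small.'' At the same time $z_n$ differs from $\mathcal D_{22}=\prod_k\bigl(-(l_k+l_{k-1})\alpha_{k-1}\hat\beta_k\bigr)$ by at most $\|P_n-\mathcal D\|$, and $|\mathcal D_{22}|=\prod_k(l_k+l_{k-1})e^{a_k(1/l_{k-1}-1/l_k)}$ is a finite product of strictly positive factors, hence bounded away from $0$ (it tends to $(2l_0)^n$ as the viscosities coalesce).

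Combining the two estimates via the reverse triangle inequality gives
$$
|\lambda_{0,0}+\lambda_{0,1}|=|x_n+z_n|\ \ge\ |\mathcal D_{22}|-2\,\|P_n-\mathcal D\|,
$$
and since the subtracted term tends to $0$ while $|\mathcal D_{22}|$ stays positive as the ratios shrink, choosing each $\varepsilon_k$ small enough forces the right-hand side to be strictly positive, which is the assertion. The nested formulation in the ``In particular'' sentence is then honoured by fixing $l_0$ first and selecting $l_1,\dots,l_n$ successively, so that at every stage the already-fixed data determine the admissible thresholds $\varepsilon_k$.

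The step I expect to be the main obstacle is securing uniform constants. The entry $\alpha_{k-1}\beta_k$ inside $N_k$ has modulus $e^{a_k(1/l_{k-1}+1/l_k)}$, which is large when the jump height $a_k$ is large or the viscosity is small, so the constants controlling $\|N_k\|$ and the lower bound for $|\mathcal D_{22}|$ depend delicately on the fixed geometry; consequently the thresholds $\varepsilon_k$ will generically have to be taken exponentially small in the $a_k$. The crux is therefore to check that, for a fixed configuration of jump heights and with the viscosities confined to a bounded range --- exactly what the inductive ``close-to-predecessor'' choice guarantees --- all of $\|D_k\|$, the bounds on $\|N_k\|$, and $|\mathcal D_{22}|^{-1}$ remain finite, so that a genuinely positive threshold exists.
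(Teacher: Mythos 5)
Your argument is correct in substance, but it is not the proof the paper gives; in fact it is a rigorous implementation of the heuristic the paper states just \emph{before} its theorem (``the matrices $A_k^{-1}B_k$ become nearly diagonal, so $\lambda_{0,0}$ can be made arbitrarily small'') and then does not actually use. Like the paper, you reduce everything via \eqref{234} to showing $x_n+z_n\neq 0$. The paper then argues by induction on the number of factors: the base case is the unconditional $4\times4$ result (Theorem \ref{thm1}), and the inductive step uses the exact one-step recursion
\[
x_{k+1}+z_{k+1}=\hat\beta_{k+1}\bigl[(w_k+y_k)\,\hat\alpha_k\,(l_{k+1}-l_k)-(x_k+z_k)\,\alpha_k\,(l_{k+1}+l_k)\bigr],
\]
so that $x_{k+1}+z_{k+1}=0$ would force $x_k+z_k=\frac{w_k+y_k}{\alpha_k^{2}}\cdot\frac{l_{k+1}-l_k}{l_{k+1}+l_k}$; calibrating $\varepsilon_k$ so that $\frac{|w_k+y_k|}{|\alpha_k|^{2}}\varepsilon_k=\tilde\varepsilon_k<|x_k+z_k|$ (the inductive lower bound) yields a contradiction. (The paper's displayed third line and its choice of $\tilde\varepsilon_k$ carry a sign typo, $w_k-y_k$ instead of $w_k+y_k$.) This route is purely algebraic: one exact identity per jump, no norms, no comparison with a diagonal product, each threshold defined adaptively from the partial product already built --- and, notably, no smallness condition at all on the first jump. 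Your perturbative route --- splitting $A_k^{-1}B_k=D_k+N_k$, telescoping $P_n$ against $\mathcal D=D_1\cdots D_n$, and playing $|\mathcal D_{22}|$ off against $\|P_n-\mathcal D\|$ --- proves strictly more in the coalescing regime: it shows quantitatively that $x_n\to 0$ and $z_n\to\mathcal D_{22}$, and it allows the thresholds to be fixed in advance rather than adaptively. Its cost is exactly the issue you flag at the end: your constants depend on the $l_k$'s that are still being chosen. The clean way to close that loop is a two-step choice: first impose an a priori cap, say $\varepsilon_k\le\tfrac12$ for all $k$, which confines every $l_k$ to $[3^{-n}l_0,\,3^{n}l_0]$ and hence makes $\|D_k\|$, the bound $\|N_k\|\le C\,\varepsilon_{k-1}$, and the positive lower bound on $|\mathcal D_{22}|$ uniform in the admissible configurations; only then shrink the $\varepsilon_k$ below the resulting explicit threshold. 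With that addition your proof is complete. (Neither argument is affected by the fact that \eqref{2N} omits the scalar factor $-1/(2l_{k-1})$ from $A_k^{-1}B_k$, since the nonvanishing of $\lambda_{0,0}+\lambda_{0,1}$, and hence the unique solvability of \eqref{318}, is invariant under nonzero rescaling of the product.)
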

\begin{proof}
We prove inductively, that $x_n+z_n\ne 0$, $\forall n\in\mathbb{N}$, which is all we need, since $x_n=\lambda_{0,0}$ and $z_n=\lambda_{0,1}$. \\
\textbf{Base case:} This is done in Theorem \ref{thm1} and holds without further assumptions.\\
\textbf{Assumption:} We can now assume that $x_k+ z_k\ne 0$. This implies there exists an $\tilde{\varepsilon_k}$ such that $|x_k+z_k|>\tilde{\varepsilon}_k$. Choose $\varepsilon_k$ such that $\tilde{\varepsilon}_k=\frac{|w_k-y_k|}{\alpha_k^2}\varepsilon_k$.\\
\textbf{Induction Step:} By definition we have:
$$\begin{bmatrix}
w_k & x_k\\
y_k & z_k
\end{bmatrix}A_{k+1}^{-1}B_{k+1}=\begin{bmatrix}
w_{k+1} & x_{k+1}\\
y_{k+1} & z_{k+1}
\end{bmatrix}.$$
Calculating the above with the help of \eqref{2N} we get:
\begin{equation}
    \begin{split}
       x_{k+1}=w_k\hat{\alpha}_{k}\hat{\beta}_{k+1}(l_{k+1}-l_{k})-x_k\alpha_k\hat{\beta}_{k+1}(l_{k+1}+l_{k}),\\
z_{k+1}=y_k\hat{\alpha}_{k}\hat{\beta}_{k+1}(l_{k+1}-l_{k})-z_k\alpha_k\hat{\beta}_{k+1}(l_{k+1}+l_{k}). 
    \end{split}
\end{equation}
This yields:
\begin{align}
    \begin{split}
        x_{k+1}+z_{k+1}&=0\\\
        \iff(l_{k+1}-l_k) (w_k+y_k)\hat{\alpha}_k\hat{\beta}_{k+1}-(x_k+z_k) (l_{k+1}+l_k) \alpha_k\hat{\beta}_{k+1}&=0\\\
        \iff\frac{w_k-y_k}{\alpha_k^2}\frac{l_{k+1}-l_k}{l_{k+1}+l_k}&=(x_k+z_k).
    \end{split}
\end{align}
If we now choose $l_{k+1}$ close enough to $l_k$ such that the assumption holds, then we have:
$$\tilde{\varepsilon}_k<|x_k+z_k|=\frac{|w_k-y_k|}{|\alpha_k^2|}\frac{|l_{k+1}-l_k|}{|l_{k+1}+l_k|}\le\frac{|w_k-y_k|}{|\alpha_k^2|}\varepsilon_k=\tilde{\varepsilon}_k$$
This is a contradiction. Hence $|x_{k+1}+z_{k+1}|>0,$ which implies $x_{k+1}+z_{k+1}\ne 0.$

\end{proof}
\begin{remark}
The case of a piecewise-constant eddy vorticity with one jump discontinuity for the oceanic near-surface Ekman layer was studied recently in \cite{Consti1}. The considerations in the present paper are devoted to atmospheric piecewise-constant eddy vorticities with $N \in \{1,2,3\}$ arbitrary jump discontinuities and $N >3$ small jumps in the eddy viscosity coefficient. We expect that with increasing $N$ there should be a convergence toward the solution with a continuous eddy viscosity coefficient. However, for a general continuous eddy viscosity the study of the behaviour of the corresponding solution is quite challenging (see the discussion in \cite{Consti4} for the case of the ocean), so that an approximation with piecewise-constant functions having finitely many jumps is helpful. Gradually increasing number of jumps should, in principle, give an increasingly closer approximation to the solution corresponding to a continuous eddy viscosity. We believe that these are important directions for further studies.
\end{remark}

\section{The Ekman spiral for piecewise-uniform eddy viscosity}
We define the angle $\gamma$ between the wind at any height and that of the geostrophic vector in the same manner as in \cite{Consti2}:
\begin{equation}\label{311}
    \tan(\gamma(z))=\frac{\Im\left(\frac{\psi(z)}{\psi_g}\right)}{\Re\left(\frac{\psi(z)}{\psi_g}\right)}.
\end{equation}

\subsection{The surface deflection angle}
In particular we calculate the angle between the surface wind and that of the geostrophic vector. For brevity, we refer to this angle as the deflection angle. After inserting $\psi(0)$ into Equation \eqref{311} we see from the initial condition \eqref{6} that L'Hospital's rule is needed. This means
\begin{equation}\label{3115}
    \lim\limits_{z \rightarrow 0}\frac{\Im\left(\frac{\psi(z)}{\psi_g}\right)}{\Re\left(\frac{\psi(z)}{\psi_g}\right)}=\lim\limits_{z \rightarrow 0}\frac{\Im\left(\frac{\psi(z)}{\psi_g}\right)'}{\Re\left(\frac{\psi(z)}{\psi_g}\right)'}.
\end{equation}
We can immediately check that $\psi'(z)=(1+i)(A+B)$ for $0\le z< h$. Inserting this into Equation \eqref{3115} and using that $e^{i\theta}=\cos(\theta)+i\sin(\theta)$ for $\theta\in\mathbb{R}$, we get
\begin{equation}\label{312}
    \tan(\gamma_0)=\frac{\alpha^2-\beta^2+2\alpha\beta\sin(2h)}{\alpha^2-\beta^2-2\alpha\beta\sin(2h)},
\end{equation}
where $\alpha:=(1+i)e^h$ and $\beta:=(1-i)e^{-h}$.

\section{Interpretations and Graphs}

Considering \cite{Consti2} directly or taking values from \cite{holton2013introduction}, and combining them with the modelling in \cite{ekmantype}, we get values $l$ of order $1$ to $0.01$, which for brevity we call eddy viscosity. 
Increasing jump-points $h$ will affect the convergence-rate of the angle towards balance, as it is clearly visible from the derived equations and will be discussed in Chapter \ref{anaprop}.
\subsection{Two examples for non-$45^{\circ}$-degree deflection angles}
We showcase two examples, where we let lower layer (up to the jump point $h$) eddy viscosity be $1$ as before and choose upper layer eddy viscosity $l=5$ and $l=0.08$, representing high and low end values, see \cite{Consti2}. Furthermore, we selected $h=1.1$ and $h=0.35$ to demonstrate the convergence rates.
In Figure \ref{fig:sub1}, where low eddy-viscosity is chosen, we observe a deflection angle over $45^{\circ}$, whereas a deflection angle under $45^{\circ}$ occurs in Figure \ref{fig:sub2}, taking larger eddy viscosity for $l$. Moreover, an obvious difference in the convergence rate towards rotational balance of the angle can be observed, which is caused by the choice of the jump point $h$.

\begin{figure}[ht!]
    \centering
    \begin{subfigure}[b]{0.47\textwidth} 
        \centering
        \includegraphics[width=\linewidth]{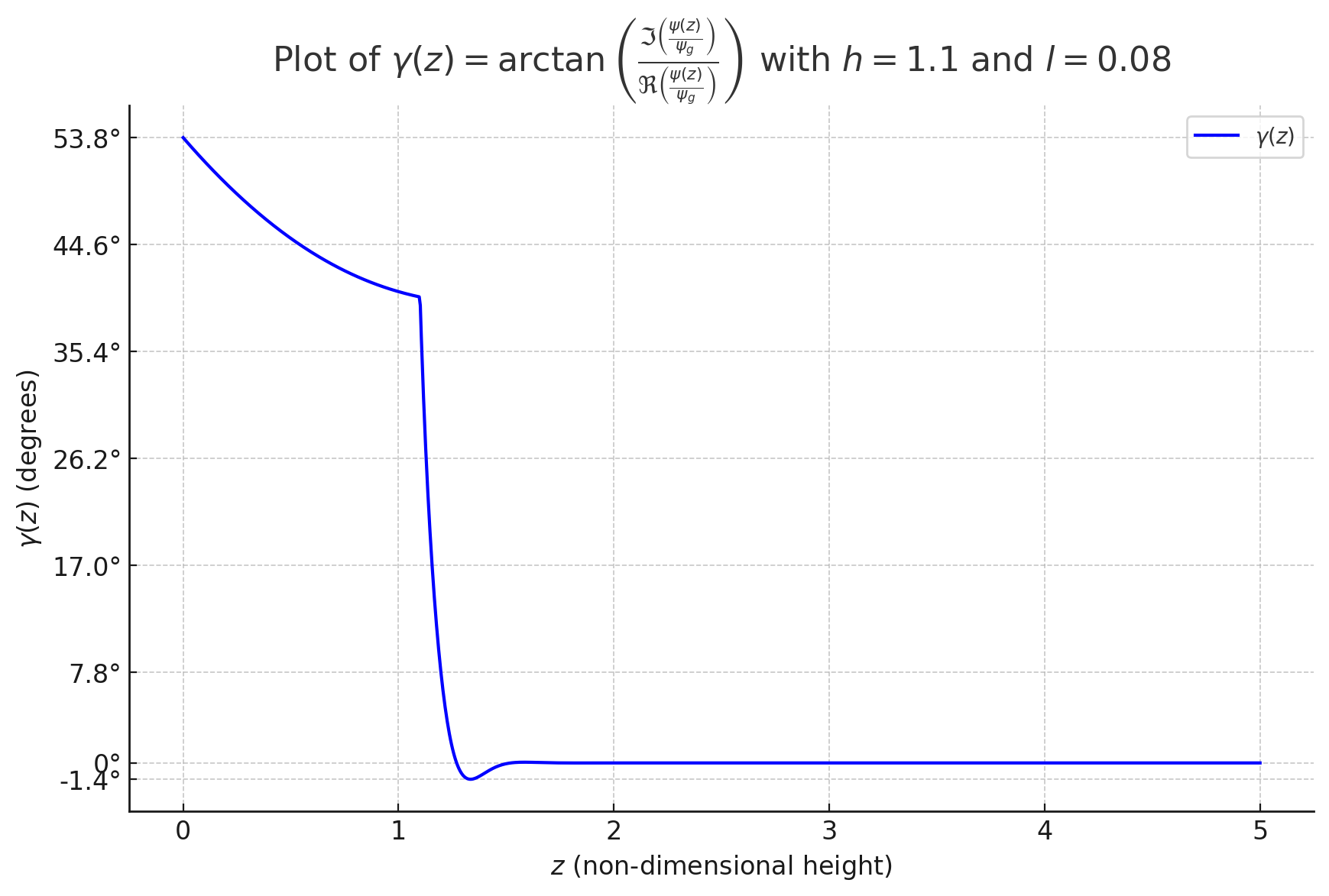}
        \caption{Small eddy viscosity ($l=0.08$) yield a deflection angle over $45^{\circ}$.}
        \label{fig:sub1}
    \end{subfigure}%
    \hspace{0.05\textwidth} 
    \begin{subfigure}[b]{0.47\textwidth} 
        \centering
        \includegraphics[width=\linewidth]{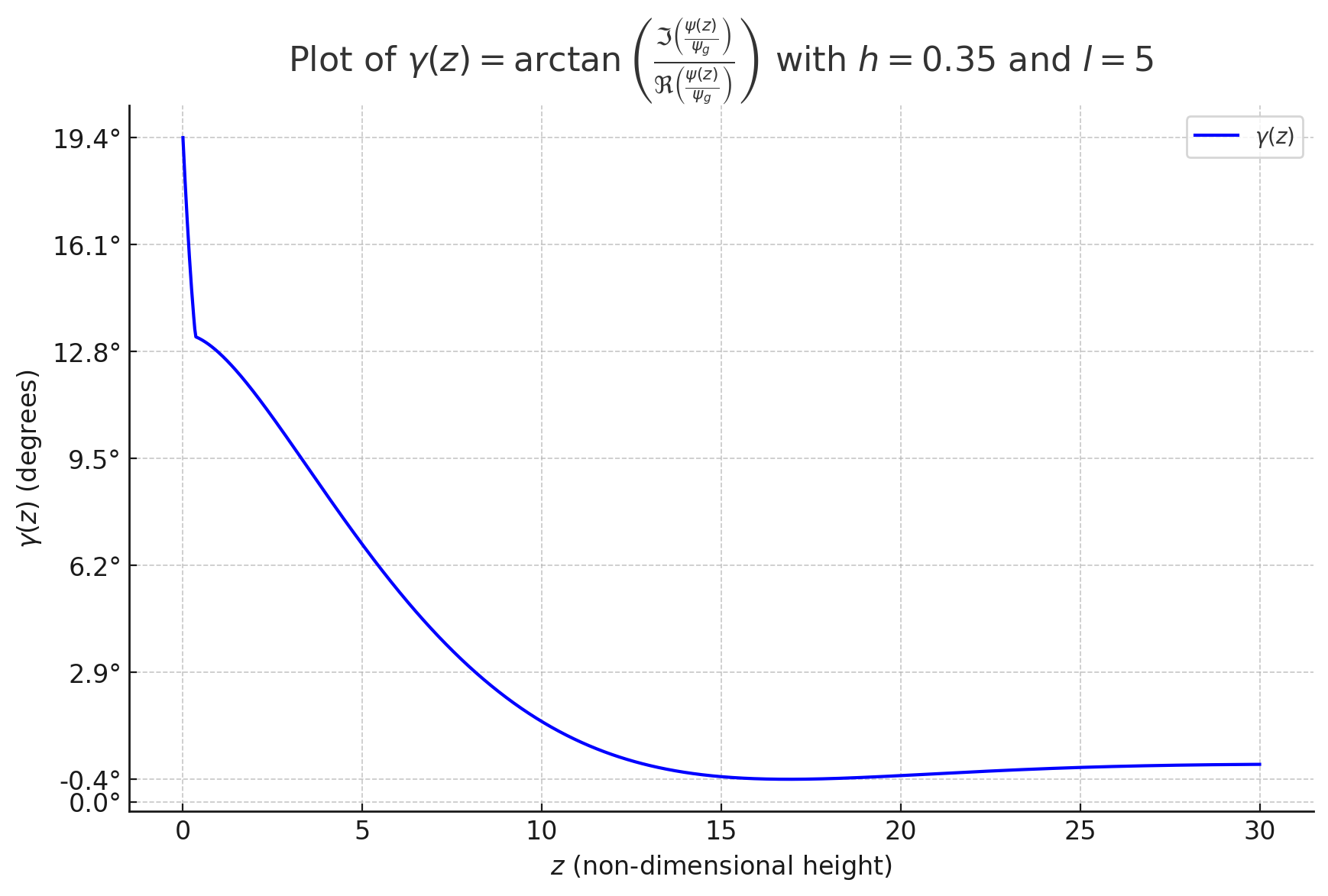}
        \caption{Large eddy viscosity ($l=5$) yield a deflection angle under $45^{\circ}$.}
        \label{fig:sub2}
    \end{subfigure}
    \caption{Comparison of deflection angles for different values of $l$ and $h$.}
    \label{fig:comparison}
\end{figure}

\begin{figure}[ht!]
    \centering
    \begin{subfigure}[b]{0.47\textwidth}
        \centering
        \includegraphics[width=\linewidth]{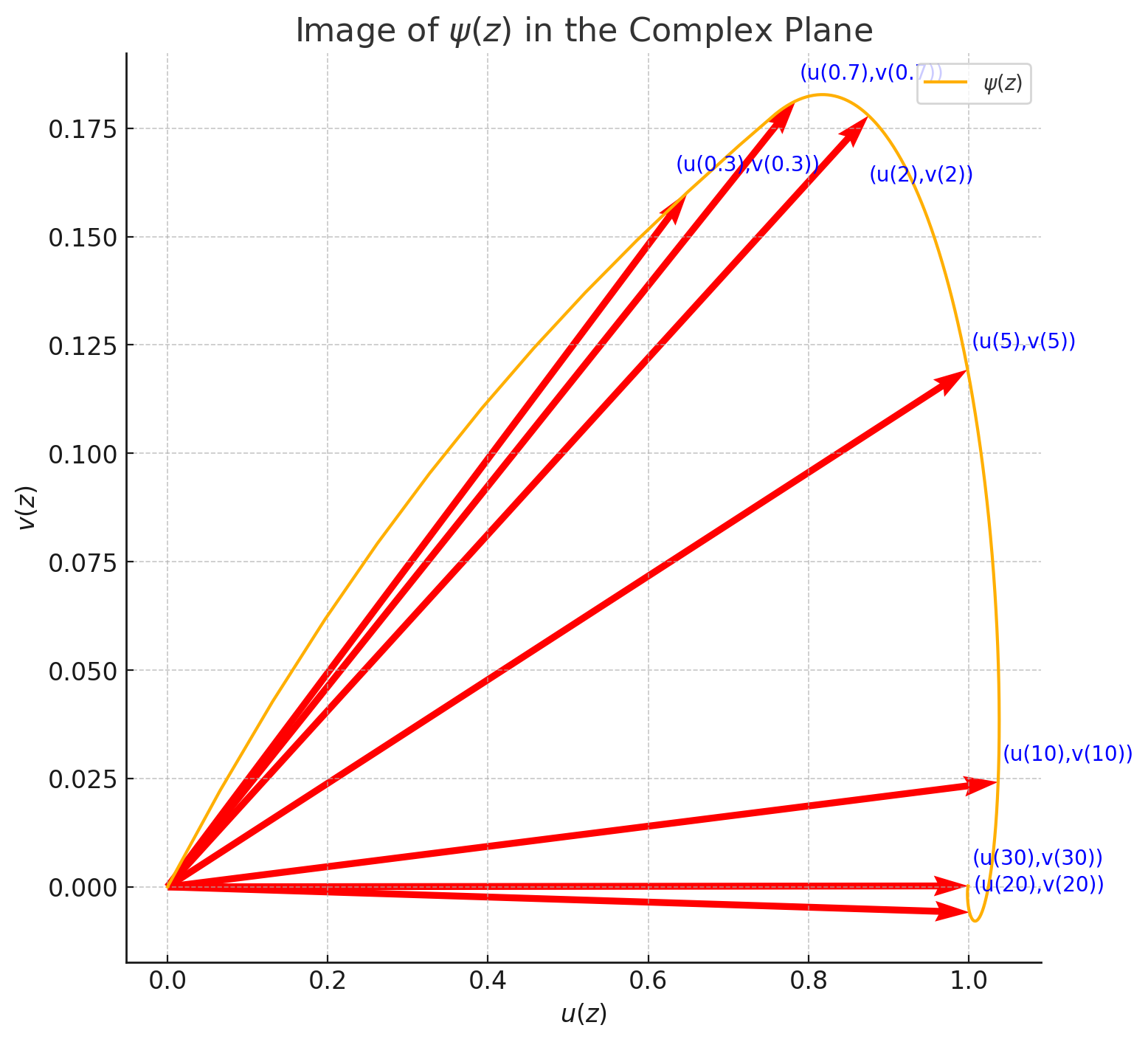}
        \caption{Small eddy-viscosity, high jump-point: \\$l=0.08$, $h=1.1$.}
        \label{fig:sub11}
    \end{subfigure}%
    \hspace{0.05\textwidth} 
    \begin{subfigure}[b]{0.47\textwidth}
        \centering
        \includegraphics[width=\linewidth]{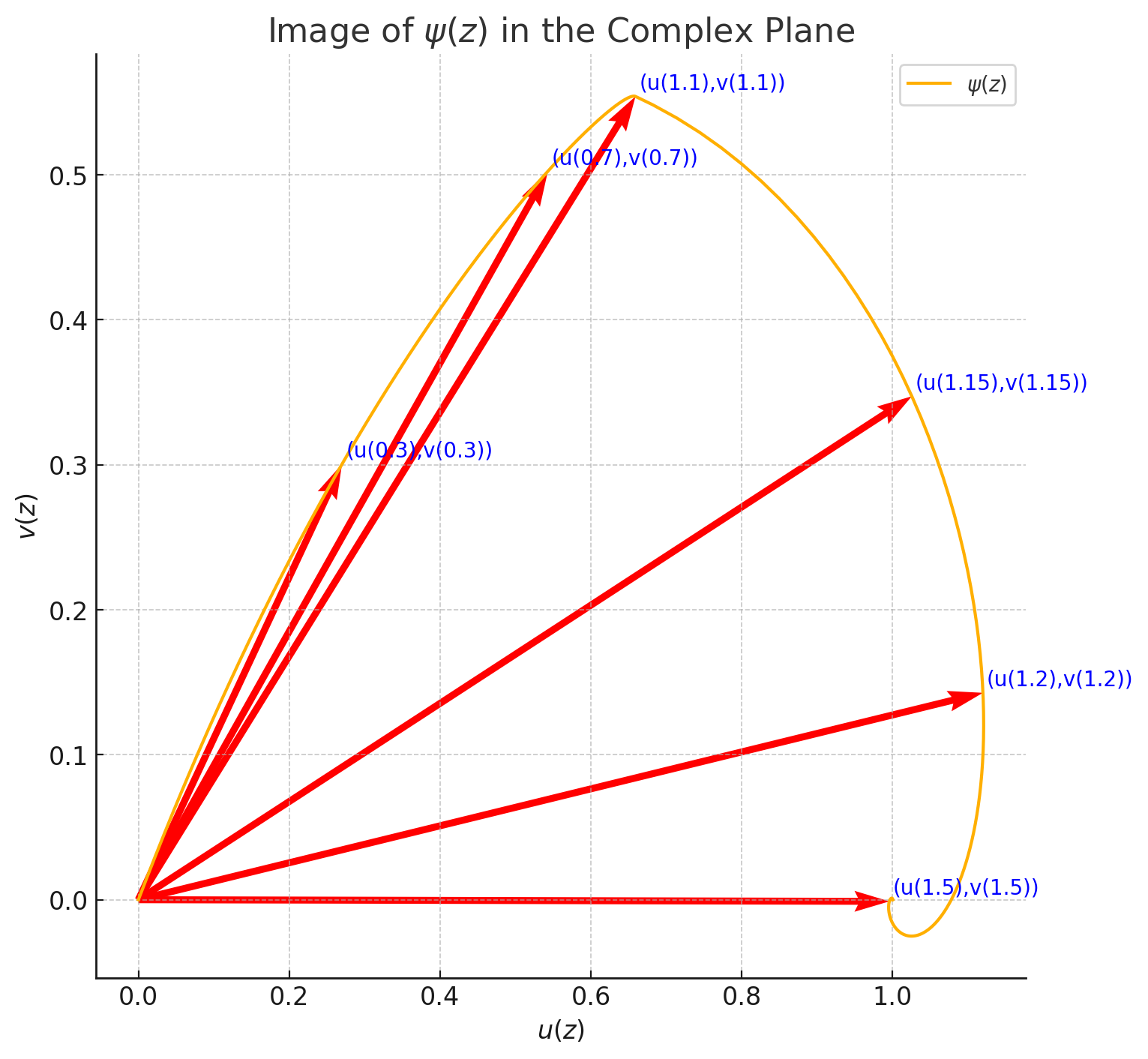}
        \caption{Large eddy viscosity, low jump-point:\\ $l=5$, $h=0.35$.}
        \label{fig:sub22}
    \end{subfigure}
    \caption{Velocity hodographs for different values of $l$ and $h$.}
    \label{fig:comparison}
\end{figure}
Another way to represent the Ekman spiral is via a velocity hodographs, see Figures \ref{fig:sub11},\ref{fig:sub22}. In particular, they can be visualized as looking the at the spiral from the top with according slope generated by the velocity vectors.
They have the disadvantage, that, due to to vanishing velocity at the bottom of the Ekman layer, it is rather hard to observe the exact deflection angle. Nevertheless they give an insight on the behaviour of the spiral throughout the Ekman layer, including convergence rate, jump-points and rotation.

\subsection{Analytic properties}\label{anaprop}
When observing the examples above, the next natural question is: What happens to the deflection, angle when we take the extreme cases?\\
We rewrite \eqref{312}:
\begin{equation}\label{defl}
    \tan(\gamma_0)=\frac{(1+l)^2e^{2h}-(1-l)^2e^{-2h}+2(1-l^2)\sin(2h)}{(1+l)^2e^{2h}-(1-l)^2e^{-2h}-2(1-l^2)\sin(2h)}.
\end{equation}
Let $l\in\mathbb{R}^+$ be fixed and $h\rightarrow0$ or $h\rightarrow\infty$, then in both cases clearly $\gamma_0=\pi/4=45^{\circ}$. In both cases, we have constant eddy viscosity, hence it makes sense, that we get the same result as in Ekman's paper \cite{Ekman}.
Now regarding small and large upper layer eddy viscosity,  where jump points are adapted in succession, we obtain the following results:\\
Let first $l\rightarrow0$ and then $h\rightarrow0$. Then $\gamma_0=\pi/2=90^{\circ}$.
Indeed, we directly have
    $$\lim\limits_{l \rightarrow 0}\tan(\gamma_0)=\frac{\sinh(2h)+\sin(2h)}{\sinh(2h)-\sin(2h)}.$$
    Letting $h\rightarrow0$ and using L'Hospital's rule we have $\tan(\gamma_0)\rightarrow \infty$, which corresponds to $\gamma_0=\pi/2=90^{\circ}$.\\
Let now $l\rightarrow\infty$ and then $h\rightarrow0$. Then $\gamma_0=0=0^{\circ}$,
since after using L'Hospital's rule twice to Equation \eqref{defl}, we have
    $$\lim\limits_{l \rightarrow 0}\tan(\gamma_0)=\frac{\sinh(2h)-\sin(2h)}{\sinh(2h)+\sin(2h)}.$$ 
    Letting now $h\rightarrow0$ and using L'Hospital's rule again, we have $\tan(\gamma_0)\rightarrow 0$, which corresponds to $\gamma_0=0=0^{\circ}$.\\
    If we let $h\rightarrow\infty$, instead of $h\rightarrow0$, we have $\gamma_0=\pi/4=45^{\circ}$, by the very same argument as before. \\
In every case $\gamma(z)\in(0,\frac{\pi}{2})$ which means by Equation \eqref{311}, that $\psi_g$ is to the right of $\psi_0$. By the boundary condition \eqref{6}, $\gamma$ turns clockwise with increasing height until it eventually becomes (almost) zero.
\section{Conclusion}
Motivated by Ekman's work in 1905 \cite{Ekman} and the recent papers \cite{Consti1} and \cite{Consti2}, we considered well-known equations of motion for mesoscale steady flow in the Ekman layer in non-equatorial regions of the northern hemisphere. Then, we assumed the eddy viscosity to be a step function with one jump. In addition to well-known boundary conditions we found new conditions to be able to exactly solve the equations of motion with non-uniform eddy viscosity. In particular we wrote down a method to solve these equations for the eddy viscosity being any step function. Then, we showcased examples, where angles over and under $45^{\circ}$ can indeed occur. Next we extremized the problem by letting $h$ and $l$ go to zero and infinity. Especially for extreme $l$ and small $h$ the angle ranges from $0^{\circ}-90^{\circ}$. Even if the highest and lowest angles are very likely not going to occur in the real world, we get an insight on some influences on the angle nevertheless.
\\

\textbf{Acknowledgments}
All data for this paper are properly cited and referred to, and can be found in the references [3-8, 10-12]. This research was supported by the grant Z 387-N of the Austrian Science Fund (FWF). The author thanks Adrian Constantin for helpful discussions. Furthermore the author sincerely appreciates corrections and improvement made by the anonymous referees.

5\end{document}